\relax
\documentclass[letterpaper]{article} 
\usepackage{aaai19}  
\usepackage{times}  
\usepackage{helvet}  
\usepackage{courier}  
\usepackage{url}  
\usepackage{graphicx}  
\frenchspacing  
\setlength{\pdfpagewidth}{8.5in}  
\setlength{\pdfpageheight}{11in}  


\usepackage{mathtools}
\usepackage{graphicx}
\usepackage{epstopdf}
\usepackage{color}
\usepackage{csquotes}

\usepackage{amsmath}
\usepackage{amsfonts}
\usepackage{amssymb}
\usepackage[subnum]{cases}
\usepackage{booktabs} 
\usepackage{graphics}

\usepackage{amssymb}
%
%

\usepackage{lipsum,adjustbox}

\usepackage[subnum]{cases}
\usepackage{color,soul}

\usepackage{algorithm,algorithmicx,algpseudocode}

\usepackage{subcaption}

\usepackage{relsize}

\newtheorem{proof}{Proof}

\newtheorem{lemma}{Lemma}

\usepackage{footnote}
\makesavenoteenv{tabular}

\usepackage{multirow}

\usepackage{empheq}





  \pdfinfo{
/Title (Outlier Aware Network Embedding for Attributed Networks)
/Author (Sambaran Bandyopadhyay, Lokesh N, M. N. Murty)
/Keywords (k1, k2, k3)}
\setcounter{secnumdepth}{2}
 \begin{document}
%
\title{Outlier Aware Network Embedding for Attributed Networks}

\author{Sambaran Bandyopadhyay \thanks{Also affiliated with Indian Institute of Science, Bangalore}\\
IBM Research, India\\
sambaran.ban89@gmail.com \\
 \And Lokesh N \\ Indian Institute of Science, Bangalore \\ nlokeshcool@gmail.com\\ 
 \And M. N. Murty \\ Indian Institute of Science, Bangalore \\ mnm@iisc.ac.in}

\maketitle

\begin{abstract}
Attributed network embedding has received much interest from the research community as most of the networks come with some content in each node, which is also known as node attributes. Existing attributed network approaches work well when the network is consistent in structure and attributes, and nodes behave as expected. But real world networks often have anomalous nodes. Typically these outliers, being relatively unexplainable, affect the embeddings of other nodes in the network. Thus all the downstream network mining tasks fail miserably in the presence of such outliers. Hence an integrated approach to detect anomalies and reduce their overall effect on the network embedding is required.

Towards this end, we propose an unsupervised outlier aware network embedding algorithm (\textit{ONE}) for attributed networks, which minimizes the effect of the outlier nodes, and hence generates robust network embeddings. We align and jointly optimize the loss functions coming from structure and attributes of the network. To the best of our knowledge, this is the first generic network embedding approach which incorporates the effect of outliers for an attributed network without any supervision. We experimented on publicly available real networks and manually planted different types of outliers to check the performance of the proposed algorithm. Results demonstrate the superiority of our approach to detect the network outliers compared to the state-of-the-art approaches. We also consider different downstream machine learning applications on networks to show the efficiency of ONE as a generic network embedding technique. The source code is made available at \url{https://github.com/sambaranban/ONE}.
\end{abstract}

\section{Introduction}
Network embedding (a.k.a. network representation learning) has gained a tremendous amount of interest among the researchers in the last few years \cite{perozzi2014deepwalk,grover2016node2vec}. Most of the real life networks have some extra information within each node. For example, users in social networks such as Facebook have texts, images and other types of content. Research papers (nodes) in a citation network have scientific content in it. Typically this type of extra information is captured using attribute vectors associated with each node.
The attributes and the link structure of the network are highly correlated according to the sociological theories like homophily \cite{mcpherson2001birds}.
But embedding attributed networks is challenging as combining attributes to generate node embeddings is not easy. Towards this end, different attributed network representation techniques such as \cite{yang2015network,huang2017accelerated,gao2018deep} have been proposed in literature. They perform reasonably well when the network is consistent in its structure and content, and nodes behave as expected.

Unfortunately real world networks are noisy and there are different outliers which even affect the embeddings of normal nodes \cite{liu2017accelerated}. For example, there can be research papers in a citation network with few spurious references (i.e., edges) which do not comply with the content of the papers. There are celebrities in social networks who are connected to too many other users, and generally properties like homophily are not applicable to this type of relationships. So they can also act like potential outliers in the system. Normal nodes are consistent in their respective communities both in terms of link structure and attributes. We categorize outliers in an attributed network into three categories and explain them as shown in Figure \ref{fig:outliers}.

One way to detect outliers in the network is to use some network embedding approach and then use algorithms like isolation forest \cite{liu2008isolation} on the generated embeddings. But this type of decoupled approach is not optimal as outliers adversely affect the embeddings of the normal nodes. So an integrated approach to detect outliers and minimize their effect while generating the network embedding is needed. Recently \cite{liang2018semi} proposes a semi supervised approach for detecting outliers while generating network embedding for an attributed network. But in principle, it needs some supervision to work efficiently. For real world networks, it is difficult to get such supervision or node labels. So there is a need to develop a completely unsupervised integrated approach for graph embedding and outlier detection which can be applicable to any attributed network.

\begin{figure}
\centering
\begin{subfigure}{0.31\linewidth}
\centering
\includegraphics[width=\linewidth]{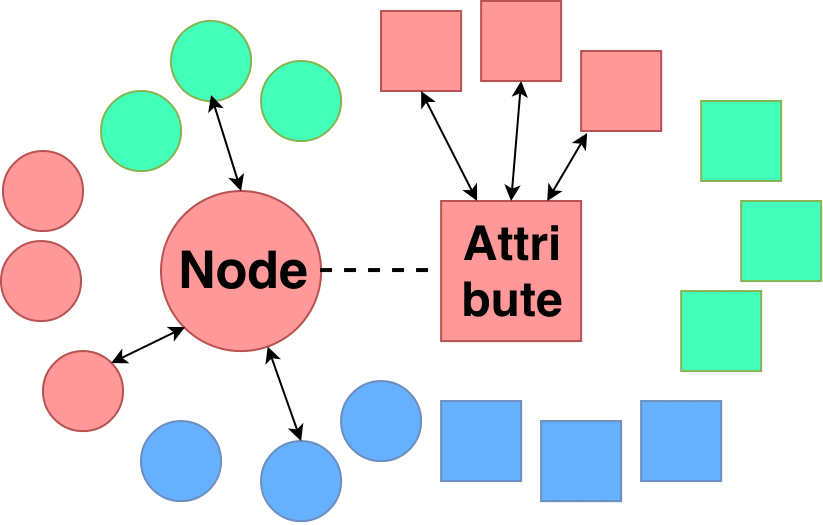}
\caption{}
\end{subfigure}
\begin{subfigure}{0.31\linewidth}
\centering
\includegraphics[width=\linewidth]{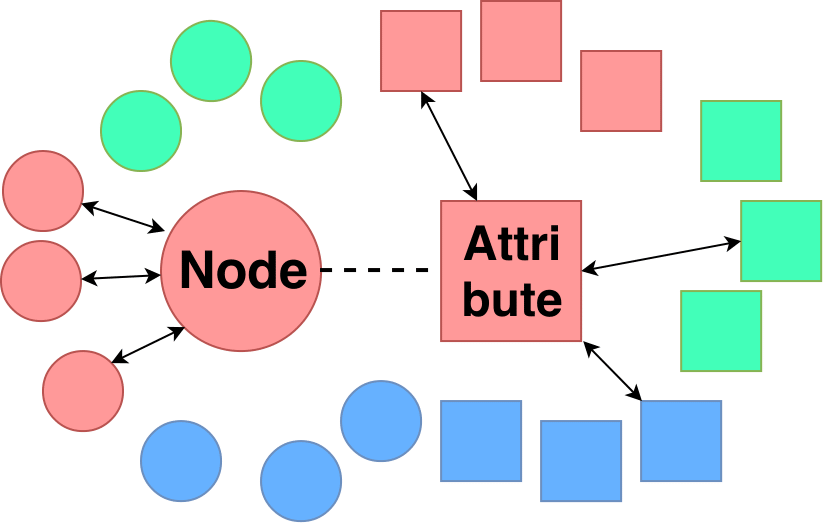}
\caption{}
\end{subfigure}
\begin{subfigure}{0.31\columnwidth}
\centering
\includegraphics[width=\linewidth]{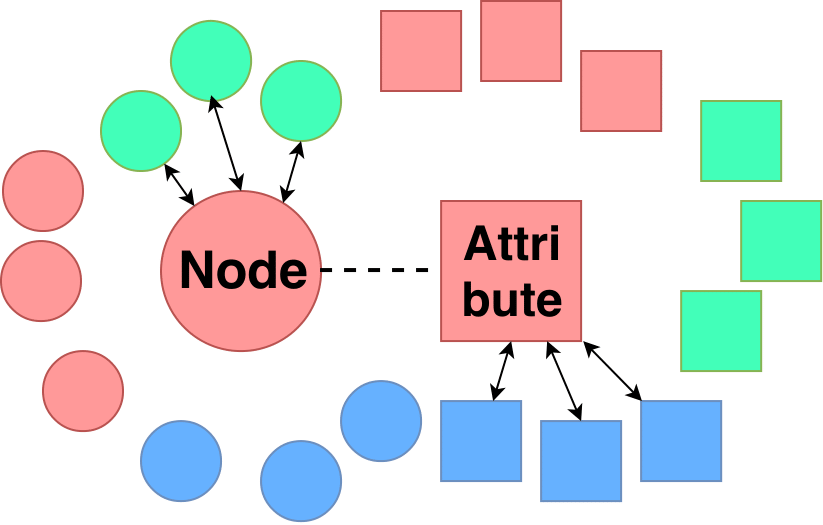}
\caption{}
\end{subfigure}
\caption{This shows different types of outliers that we consider in an attributed network. We highlight the outlier node and its associated attribute by larger circle and rectangle respectively. Different colors represent different communities. Arrows between the two nodes represent network edges and arrows between two attributes represent similarity (in some metric) between them. (a) \textbf{Structural Outlier}: The node has edges to nodes from different communities, i.e., its structural neighborhood is inconsistent. (b) \textbf{Attribute Outlier}: The attributes of the node is similar to attributes of the nodes from different communities, i.e., its attribute neighborhood is inconsistent. (c) \textbf{Combined Outlier}: Node belongs to a community structurally but it has a different community in terms of attribute similarity.}
\label{fig:outliers}
\end{figure}

\textbf{Contributions}: Following are the contributions we make.
\begin{itemize}
\item We propose an unsupervised algorithm called ONE (\textbf{O}utlier aware \textbf{N}etwork \textbf{E}mbedding) for attributed networks. It is an iterative approach to find lower dimensional compact vector representations of the nodes, such that the outliers contribute less to the overall cost function.
\item This is the first work to propose a completely unsupervised algorithm for attributed network embedding integrated with outlier detection. Also we propose a novel method to combine structure and attributes efficiently.
\item We conduct a thorough experimentation on the outlier seeded versions of popularly used and publicly available network datasets to show the efficiency of our approach to detect outliers. At the same time by comparing with the state-of-the-art network embedding algorithms, we demonstrate the power of ONE as a generic embedding method which can work with different downstream machine learning tasks such as node clustering and node classification.
\end{itemize}


\section{Related Work}\label{sec:rw}
This section briefs the existing literature on attributed network embedding, and some outlier detection techniques in the context of networks. Network embedding has been a hot research topic in the last few years and a detailed survey can be found in \cite{hamilton2017representation}. 
Word embedding in natural language processing literature, such as \cite{mikolov2013efficient} inspired the development of node embedding in network analysis.
DeepWalk \cite{perozzi2014deepwalk}, node2vec \cite{grover2016node2vec} and Line \cite{tang2015line} gained popularity for network representation just by using the link structure of the network. DeepWalk and node2vec use random walk on the network to generate node sequences and feed them to language models to get the embedding of the nodes. In Line, two different objective functions have been used to capture the first and second order proximities respectively and an edge sampling strategy is proposed to solve the joint optimization for node embedding. In \cite{ribeiro2017struc2vec}, authors propose struc2vec where nodes having similar substructure are close in their embeddings.

All the papers citeed above only consider link structure of the network for generating embeddings. Research has been conducted on attributed network representation also. TADW \cite{yang2015network} is arguably the first attempt to successfully use text associated with nodes in the network embedding via joint matrix factorization. But their framework directly learns one embedding from content and structure together. In case when there is noise or outliers in structure or content, such a direct approach is prone to be affected more. Another attributed network embedding technique (AANE) is proposed in \cite{huang2017accelerated}. The authors have used symmetric matrix factorization to get embeddings from the similarity matrix over the attributes, and use link structure of the network to ensure that the embeddings of the two connected nodes are similar. A semi-supervised attributed embedding is proposed in \cite{huang2017label} where the label information of some nodes are used along with structure and attributes. The idea of using convolutional neural networks for graph embedding has been proposed in \cite{niepert2016learning,kipf2016semi}. An extension of GCN with node attributes (GraphSage) has been proposed in \cite{hamilton2017inductive} with an inductive learning setup. These methods do not manage outliers directly, and hence are often prone to be affected heavily by them.

Recently a semi-supervised deep learning based approach SEANO \cite{liang2018semi} has been proposed for outlier detection and network embedding for attributed networks. For each node, they collect its attribute and the attributes from the neighbors, and smooth out the outliers by predicting the class labels (on the supervised set) and node context. But getting labeled nodes for real world network is expensive. So we aim to design an unsupervised attributed network embedding algorithm which can detect and minimize the effect of outliers while generating the node embeddings.  

\section{Problem Formulation} \label{sec:prob}
An information network is typically represented by a graph as $\mathcal{G} = (V, E, C)$, where $V=\{v_1, v_2,\cdots, v_N\}$ is the set of nodes (a.k.a. vertexes), each representing a data object. $E \subset \{(v_i,v_j) | v_i,v_j \in V \}$ is the set of edges between the vertexes. 
Each edge $e \in E$ is an ordered pair $e = (v_i, v_j)$ and is associated with a weight $w_{v_i,v_j} > 0$, which indicates the strength of the relation. If $\mathcal{G}$ is undirected, we have $(v_i, v_j) \equiv (v_j, v_i)$ and $w_{v_i,v_j} \equiv w_{v_j,v_i}$; if $\mathcal{G}$ is unweighted, $w_{v_i,v_j} = 1$, $\forall (v_i,v_j) \in E$.

Let us denote the $N \times N$ dimensional adjacency matrix of the graph $\mathcal{G}$ by $A = (a_{i,j})$, where $a_{i,j}=w_{v_i,v_j}$ if $(v_i,v_j) \in E$, and $a_{i,j}=0$ otherwise. So $i$th row of $A$ contains the immediate neighborhood information for node $i$. Clearly for a large network, the matrix $A$ is highly sparse in nature.
$C$ is a $N \times D$ matrix with $C_{i \cdot}$ as rows, where $C_{i \cdot} \in \mathbb{R}^D$ is the attribute vector associated with the node $v_i \in V$. $C_{id}$ is the value of the attribute $d$ for the node $v_i$. For example, if there is only textual content in each node, $c_i$ can be the tf-idf vector for the content of the node $v_i$.

Our goal is to find a low dimensional representation of $\mathcal{G}$ which is consistent with both the structure of the network and the content of the nodes. More formally, for a given network $\mathcal{G}$, network embedding is a technique to learn a function $f : v_i \mapsto \mathbf{y_i} \in \mathbb{R}^K$, i.e., it maps every vertex to a $K$ dimensional vector, where $K < min(N,D)$. The representations should preserve the underlying semantics of the network. Hence the nodes which are close to each other in terms of their topographical distance or similarity in attributes should have similar representations. We also need to reduce the effect of outliers, so that the representations for the other nodes in the network are robust.

\section{Solution Approach: ONE}
We describe the whole algorithm in different parts.
\subsection{Learning from the Link Structure}
Given graph $\mathcal{G}$, each node $v_i$ by default can be represented by the $i$th row $A_{i\cdot}$ of the adjacency matrix. Let us assume the matrix $G$ $\in \mathbb{R}^{N \times K}$ be the network embedding of $\mathcal{G}$, only by considering the link structure. Hence row vector $G_{i\cdot}$ is the $K$ dimensional ($K < min(N,D)$) compact vector representation of node $v_i$, $\forall v_i \in V$. Also let us introduce a $K \times N$ matrix $H$ to minimize the reconstruction loss: $\sum\limits_{i=1}^N\sum\limits_{j=1}^N (A_{ij} - G_{i\cdot} \cdot H_{\cdot j})^2$, where $H_{\cdot j}$ is the $j$th column of $H$, and $G_{i\cdot} \cdot H_{\cdot j}$ is the dot product between these two vectors\footnote[1]{We treat both row vector and column vector as vectors of same dimension, and hence use the dot product instead of transpose to avoid  cluttering of notation}. $k$th row of $H$ can be interpreted as the $N$ dimensional description of $k$th feature, where $k = 1,2,\cdots,K$. This reconstruction loss tends to preserve the original distances in the lower dimensional spaces as shown by \cite{cunningham2015}. But if the graph has anomalous nodes, they generally affect the embedding of the other (normal) nodes. To minimize the effect of such outliers while learning embeddings from the structure, we introduce the structural outlier score $O_{1i}$ for node $v_i \in V$, where $0 < O_{1i} \leq 1$. The bigger the value of $O_{1i}$, the more likely it is that node $v_i$ is an outlier, and lesser should be its contribution to the total loss. Hence we seek to minimize the following cost function w.r.t. the variables $O_1$ (set of all structural outlier scores), $G$ and $H$.
\begin{align}\label{eq:L1}
\mathcal{L}_{str} = \sum_{i=1}^N\sum_{j=1}^N\log\Big({\frac{1}{O_{1i}}}\Big) (A_{ij} - G_{i\cdot} \cdot H_{\cdot j})^2
\end{align}
We also assume $\sum\limits_{i=1}^N O_{1i} = \mu$, $\mu$ being the total outlier score of the network. Otherwise minimizing Eq. 
\ref{eq:L1} amounts to assigning $1$ to all the outlier scores, which makes the loss value 0. It can be readily seen that, when $O_{1i}$ is very high (close to 1) for a node $v_i$, the contribution of this node $\sum\limits_{j=1}^N\log\Big({\frac{1}{O_{1i}}}\Big) (A_{ij} - G_{i.} \cdot H_{.j})^2$ becomes negligible, and when $O_{1i}$ is small (close to 0), the corresponding contribution is high. So naturally, the optimization would concentrate more on minimizing the contributions of the outlier (w.r.t. the link structure) nodes, to the overall objective, as desired. 

\subsection{Learning from the Attributes}
Similar to the case of structure, here we try to learn a $K$ dimensional vectorial representation $U_{i\cdot}$ from the given attribute matrix $C$, where $C_{i\cdot}$ is the attribute vector of node $v_i$. Let us consider the matrices $U \in \mathbb{R}^{N \times K}$ and $V \in \mathbb{R}^{K \times D}$, $U$ being the network embedding just respecting the set of attributes. In the absence of outliers, one can just minimize the reconstruction loss $\sum\limits_{i=1}^N\sum\limits_{d=1}^D (C_{id} - U_{i\cdot} \cdot V_{\cdot d})^2$ with respect to the matrices $U$ and $V$. But as mentioned before, outliers even affect the embeddings of the normal nodes. Hence to reduce the effect of outliers while learning from the attributes, we introduce the attribute outlier score $O_{2i}$ for node $v_i \in V$, where $0 < O_{2i} \leq 1$. Larger the value of $O_{2i}$, higher the chance that node $v_i$ is an attribute outlier. Hence we minimize the following cost function w.r.t. the variables $O_2$, $U$ and $V$.
\begin{align}\label{eq:L2}
\mathcal{L}_{attr} = \sum_{i=1}^N\sum_{d=1}^C\log\Big({\frac{1}{O_{2i}}}\Big) (C_{id} - U_{i \cdot} \cdot V_{\cdot d})^2
\end{align}
We again assign the constraint that $\sum\limits_{i=1}^N O_{2i} = \mu$ for the reason mentioned before. Hence contributions from the non-outlier (w.r.t. attributes) nodes would be bigger while minimizing Eq. \ref{eq:L2}.

\subsection{Connecting Structure and Attributes}
So far, we have considered the link structure and the attribute values of the 
network separately. Also the optimization variables of Eq. \ref{eq:L1} and that in Eq. \ref{eq:L2} are completely disjoint. But optimizing them independently is not desirable as, our ultimate goal is to get a joint low dimensional representation of each node in the network. Also we intend to regularize structure with respect to attributes and vice versa. As discussed before, link structure and attributes in a network are highly correlated and they can be often noisy individually.

One can see that, $G_{i \cdot}$ and $U_{i \cdot}$ are the representation of the same node $v_i$ with respect to structure and attributes respectively. So one can easily act as a regularizer of the other. Also as they contribute to the embedding of the same node, it makes sense to minimize $\sum\limits_{i=1}^N \sum\limits_{k=1}^K (G_{ik} - U_{ik})^2$. But it is important to note that, there is no explicit guarantee that the features in $G_{i \cdot}$ and features in $U_{i \cdot}$ are aligned, i.e., $k$th feature of the structure embeddings can be very different from the $k$th feature of attribute embeddings. Hence before minimizing the distance of $G_{i \cdot}$ and $U_{i \cdot}$, it is important to align the features of the two embedding spaces.

\subsubsection{Embedding Transformation and Procrustes problem}
To resolve the issue above, we seek to find a linear map $W \in \mathbb{R^{K \times K}}$ which transforms the features from the attribute embedding space to structure embedding space. More formally we want to find a matrix $W$ which minimizes $||G - WU||_F$. This type of transformation has been used in the NLP literature, particularly for machine translation \cite{lample2018word}.
If we further restrict $W$ to be an orthogonal matrix, then a closed form solution can be obtained from the solution concept of Procrustes problem \cite{schonemann1966generalized} as follows:
\begin{align}\label{eq:W}
W^* 
= \underset{W \in \mathcal{O}_K}{\text{argmin}} ||G - UW^T||_F
\end{align}
where $W^* = XY^T$ with $X \Sigma Y^T = \text{SVD}(G^T U)$, $\mathcal{O}_K$ is the set of all orthogonal matrices of dimension $K \times K$. Restricting $W$ to be an orthogonal matrix has also several other advantages as shown in the NLP literature \cite{xing2015normalized}.

But we cannot directly use the solution of Procrustes problem, as we have anomalies in the network.
As before, we again reduce the effect of anomalies to minimize the disagreement between the structural embeddings and attribute embeddings. Let us introduce the disagreement anomaly score $O_{3i}$ for a node $v_i \in V$, where $0 < O_{3i} \leq 1$. Disagreement anomalies are required to manage the anomalous nodes which are not anomalies in either of structure or attributes individually, but they are inconsistent when considering them together. Following is the cost function we minimize.
\begin{align}\label{eq:L3}
\mathcal{L}_{dis} = \sum\limits_{i=1}^N \sum\limits_{k=1}^K \log\Big({\frac{1}{O_{3i}}}\Big) \Big(G_{ik} - U_{i\cdot} \cdot (W^T)_{\cdot k}\Big)^2
\end{align}
$\sum\limits_{i=1}^N O_{3i}= \mu$. We will use the solution of Procrustes problem after applying a simple trick to  the cost function above, as shown in the derivation of the update rule of $W$ later.

\subsection{Joint Loss Function}
Here we combine the three cost functions mentioned before, and minimize the following with respect to $G$, $H$, $U$, $V$, $W$ and $O$ ($O$ contains all the variables from $O_1$, $O_2$ and $O_3$).
\begin{align}\label{eq:L}
\mathcal{L} = \mathcal{L}_{str} + \alpha \mathcal{L}_{attr} + \beta \mathcal{L}_{dis}
\end{align}
The full set of constrains are:
\[
0 < O_{1i}, O_{2i}, O_{3i} \leq 1 \; , \; \forall v_i \in V
\]
\[
\sum_{i=1}^{N}O_{1i} = \sum_{i=1}^{N}O_{2i} = \sum_{i=1}^{N}O_{3i} = \mu
\]
\[
W \in \mathcal{O}_{K} \iff W^{T}W = \mathcal{I}
\]


Here $\alpha, \beta > 0$ are weight factors. We will discuss a method to set them in the experimental evaluation. It is to be noted that, the three anomaly scores $O_{1i}$, $O_{2i}$ and $O_{3i}$ for any node $v_i$ are actually connected by the cost function \ref{eq:L}. For example, if a node is anomalous in structure, $O_{1i}$ would be high and its embedding $G_{i\cdot}$ may not be optimized well. So this in turn affects its matching with transformed $U_{i\cdot}$, and hence $O_{3i}$ would be given a higher value to minimize the disagreement loss.

\subsection{Derivations of the Update Rules}
We will derive the necessary update rules which can be used iteratively to minimize Eq. \ref{eq:L}. 
We use the alternating minimization technique, where we derive the update rule for one variable at a time, keeping all others fixed.

\subsection{Updating $G$, $H$, $U$, $V$}
We need to take the partial derivative of $\mathcal{L}$ (Eq. \ref{eq:L}) w.r.t one variable at a time and equate that to zero. For example, $\frac{\partial \mathcal{L}}{\partial G_{ik}} = 0 \Rightarrow \sum\limits_{j=1}^N \log \big(\frac{1}{O_{1i}}\big) (A_{ij} - G_{i\cdot} \cdot H_{\cdot j})(-H_{kj})
+ \log \big(\frac{1}{O_{3i}}\big) (G_{ik} - U_{i\cdot}\cdot (W^T)_{\cdot k}) = 0$.
Solving it for $G_{ik}$,
\begin{empheq}[box=\fbox]{align}\label{eq:G}
\fontsize{7.0pt}{7.5pt} \selectfont
G_{ik} = \frac{
G_{ik}^{num1} + \beta \log\Big({\frac{1}{O_{3i}}}\Big)(W_{k \cdot}\cdot U_{i\cdot})
}
{
\log\Big({\frac{1}{O_{1i}}}\Big)(H_{k \cdot}\cdot H_{k \cdot}) + \beta \log\Big({\frac{1}{O_{3i}}}\Big)
} \\
G_{ik}^{num1} = \log\Big({\frac{1}{O_{1i}}}\Big)\sum_{j=1}^N(A_{ij} - \sum_{k^{'} \neq k}G_{ik^{'}}H_{k^{'}j})H_{kj}
\nonumber
\end{empheq}
\vskip-22mm
Similarly we can get the following update rules.
\begin{align}\label{eq:H}
 \boxed{
H_{kj} = \frac{
\sum_{i=1}^N\log\Big({\frac{1}{O_{1i}}}\Big)(A_{ij} - \sum_{k^{'} \neq k}G_{ik^{'}}H_{k^{'}j})G_{ik}
}
{
\sum_{i=1}^N\log\Big({\frac{1}{O_{1i}}}\Big)G_{ik}^{2}
}
 }
\end{align}
\begin{empheq}[box=\fbox]{align}\label{eq:U}
\tiny
U_{ik} = \frac{
U_{ik}^{num1} + U_{ik}^{num2}
}
{
\beta\log\Big({\frac{1}{O_{2i}}}\Big)(V_{k\cdot}\cdot V_{k\cdot}) + \gamma\log\Big({\frac{1}{O_{3i}}}\Big)W_{\cdot k}\cdot W_{\cdot k}
} \\
U_{ik}^{num1} = \beta\log\Big({\frac{1}{O_{2i}}}\Big)\sum_{d=1}^D(C_{id} - \sum_{k^{'} \neq k}U_{ik^{'}}V_{k^{'}d})V_{kd} \nonumber \\
U_{ik}^{num2} = \gamma\log\Big({\frac{1}{O_{3i}}}\Big)\Big(G_{i\cdot} - (U_{i\cdot}W) - (U_{ik}\cdot W_{\cdot k})\Big) \nonumber
\end{empheq}
\vskip-28mm
\begin{align}\label{eq:V}
\boxed{
V_{kd} = \frac{
\sum_{i=1}^N\log\Big({\frac{1}{O_{2i}}}\Big)(C_{id} - \sum_{k^{'} \neq k}U_{ik^{'}}V_{k^{'}d})U_{ik}
}
{
\sum_{i=1}^N\log\Big({\frac{1}{O_{2i}}}\Big)U_{ik}^{2}
}
}
\end{align}

\subsection{Updating $W$}
We use a small trick to directly apply the closed form solution of Procrustes problem as follows.
\begin{align}
\mathcal{L}_{dis} &= \sum\limits_{i=1}^N \sum\limits_{k=1}^K \log\Big({\frac{1}{O_{3i}}}\Big) \Big(G_{ik} - U_{i\cdot} \cdot (W^T)_{\cdot k}\Big)^2 \nonumber \\
&= \sum\limits_{i=1}^N \sum\limits_{k=1}^K \Big(\bar{G}_{ik} - \bar{U}_{i\cdot} \cdot (W^T)_{\cdot k}\Big)^2
\end{align}
Here the new matrices are defined as, $(\bar{G})_{i,k} = \sqrt[]{\log\Big({\frac{1}{O_{3i}}}\Big)} G_{ik}$ and $\bar{U}_{ik} = \sqrt[]{\log\Big({\frac{1}{O_{3i}}}\Big)} U_{ik}$. Say, $X \Sigma Y^T = \text{SVD}(\bar{G}^T \bar{U})$, then $W$ can be obtained as:
\begin{align}\label{eq:W}
\boxed{
W = XY^T
}
\end{align}

\subsection{Updating $O$}
We derive the update rule for $O_1$ first. Taking the Lagrangian of Eq. \ref{eq:L1} with respect to the constraint $\sum\limits_{i=1}^N O_{1i} = \mu$, we get,
\begin{align}
& \frac{\partial }{\partial O_{1i}} \sum\limits_{i,j}\log\Big({\frac{1}{O_{1i}}}\Big) (A_{ij} - G_{i\cdot} \cdot H_{\cdot j})^2 + \lambda(\sum\limits_{i} O_{1i} - \mu) \nonumber  
\end{align}
$\lambda \in \mathbb{R}$ is the Lagrangian constant. Equating the partial derivative w.r.t. $O_{1i}$ to 0:
\begin{align*}
&-\frac{(A_{ij} - G_{i\cdot} \cdot H_{\cdot j})^2}{O_{1i}} + \lambda = 0, \;
\Rightarrow O_{1i} = \frac{(A_{ij} - G_{i\cdot} \cdot H_{\cdot j})^2}{\lambda}
\end{align*}
So, $\sum\limits_{i=1}^N O_{1i} = \mu$ implies $\sum\limits_{i=1}^N \frac{(A_{ij} - G_{i\cdot} \cdot H_{\cdot j})^2}{\lambda} = \mu$. Hence,
\begin{align}\label{eq:O1}
\boxed{
O_{1i} = \frac{
\Big(\sum_{j=1}^N(A_{ij} - G_{i\cdot}\cdot H_{\cdot j})^2 \Big) \cdot \mu
}
{
\sum_{i=1}^N\sum_{j=1}^N(A_{ij} - G_{i\cdot}\cdot H_{\cdot j})^{2}
}
}
\end{align}
It is to be noted that, if we set $\mu = 1$, the constraints $0 < O_{i1} \leq 1$, $\forall v_i \in V$, are automatically satisfied. Even it is possible to increase the value of $\mu$ by a trick similar to \cite{gupta2012integrating}, but experimentally we have not seen any advantage in increasing the value of $\mu$. Hence, we set $\mu = 1$ for all the reported experiments. A similar procedure can be followed to derive the update rules for $O_2$ and $O_3$.
\begin{align}\label{eq:O2}
\boxed{
O_{2i} = \frac{
\Big(\sum_{d=1}^D(C_{id} - U_{i\cdot}\cdot V_{\cdot d})^2 \Big) \cdot \mu
}
{
\sum_{i=1}^N\sum_{j=1}^D(C_{id} - U_{i\cdot}\cdot V_{\cdot d})^{2}
}
}
\end{align}
\begin{align}\label{eq:O3}
\boxed{
O_{3i} = \frac{
\Big( \sum_{k=1}^K(G_{ik} - W_{i\cdot}\cdot U_{\cdot k})^2 \Big) \cdot \mu
}
{
\sum_{i=1}^N\sum_{k=1}^K(G_{ik} - W_{i\cdot}\cdot U_{\cdot k})^2
}
}
\end{align}

\subsection{Algorithm: ONE}\label{sec:algOne}
With the update rules derived above, we summarize ONE in Algorithm \ref{alg:oane}. variables $G$, $H$, $U$ and $V$ can be initialized by any standard matrix factorization technique ($A \approx GH$ and $C \approx UV$) such as \cite{lee2001algorithms}. As our algorithm is completely unsupervised, we assume not to have any prior information about the outliers. So initially we set equal outlier scores to all the nodes in the network and normalize them accordingly. At the end of the algorithm, one can take the final embedding of a node as the average of the structural and the transformed attribute embeddings. Similarly the final outlier score of a node can be obtained as the weighted average of three outlier scores. 

\begin{lemma}
The joint cost function in Eq. \ref{eq:L} decreases after each iteration (steps 4 to 6) of the for loop of Algorithm \ref{alg:oane}.
\end{lemma}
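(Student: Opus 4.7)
The plan is to view one pass of steps~4--6 as a round of block coordinate descent on the joint objective $\mathcal{L}$ in Eq.~\ref{eq:L}, where the blocks are $G$, $H$, $U$, $V$, $W$, $O_1$, $O_2$, $O_3$ (and, within $G,H,U,V$, the further scalar coordinates $G_{ik},H_{kj},U_{ik},V_{kd}$, since those were the level at which the update rules were derived). The update rules in Eqs.~\ref{eq:G}--\ref{eq:O3} were each obtained by freezing every other variable and solving a restricted subproblem. So the lemma reduces to certifying that each closed-form update is an \emph{exact minimizer} (not merely a stationary point) of its subproblem; once that is done, each individual update weakly decreases $\mathcal{L}$, and chaining the inequalities over the pass yields $\mathcal{L}^{(t+1)} \le \mathcal{L}^{(t)}$.

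For the scalar updates of $G_{ik}, H_{kj}, U_{ik}, V_{kd}$, I would observe that with all other scalars held fixed, $\mathcal{L}$ is a quadratic in the single variable with non-negative leading coefficient, e.g.\ for $G_{ik}$ the leading coefficient is
\[
\log\bigl(1/O_{1i}\bigr)\sum_{j=1}^{N} H_{kj}^{2} \;+\; \beta\,\log\bigl(1/O_{3i}\bigr),
\]
which is non-negative because $0 < O_{1i},O_{3i}\le 1$ forces $\log(1/O_{1i}),\log(1/O_{3i})\ge 0$. Hence the subproblem is convex and the stationary point produced by equating $\partial\mathcal{L}/\partial G_{ik}$ to zero is the global minimizer; this matches Eq.~\ref{eq:G}. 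The same argument applies verbatim to Eqs.~\ref{eq:H},~\ref{eq:U},~\ref{eq:V}, with any degenerate zero-leading-coefficient case trivially leaving the restricted objective constant.

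For the $W$-update I would use that $\mathcal{L}_{\mathrm{str}}$ and $\mathcal{L}_{\mathrm{attr}}$ do not involve $W$, so minimizing $\mathcal{L}$ over $W\in\mathcal{O}_K$ coincides with minimizing $\mathcal{L}_{\mathrm{dis}}$. After the reweighting trick described in the paper, this becomes
\[
\min_{W\in\mathcal{O}_K}\;\bigl\lVert \bar{G} - \bar{U}W^{T}\bigr\rVert_{F}^{2},
\]
i.e.\ exactly the orthogonal Procrustes problem, whose global minimizer is the SVD-based formula $W = XY^{T}$ of Eq.~\ref{eq:W} by Sch\"onemann's theorem. So this update also weakly decreases $\mathcal{L}$.

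The most delicate block is the outlier-score update, and that is where I expect the main obstacle. Fixing everything else, $\mathcal{L}$ as a function of $(O_{1i})_{i=1}^{N}$ has the form $\sum_{i} r_{i}\,\log(1/O_{1i}) + \mathrm{const}$, with $r_{i} := \sum_{j}(A_{ij}-G_{i\cdot}\!\cdot\! H_{\cdot j})^{2}\ge 0$, subject to $\sum_{i}O_{1i}=\mu$ and $0<O_{1i}\le 1$. Since $x\mapsto \log(1/x)$ is strictly convex on $(0,1]$ and all $r_{i}\ge 0$, this restricted objective is convex on the feasible set, so any KKT point is a global constrained minimum. Solving the Lagrangian gives precisely Eq.~\ref{eq:O1}, and I would then verify that with $\mu=1$ the resulting $O_{1i}=r_{i}/\sum_{j}r_{j}$ automatically lies in $(0,1]$ (using $r_{i}\le\sum_{j}r_{j}$), so the box constraints are inactive and the formula truly solves the constrained subproblem. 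An identical argument handles $O_{2}$ and $O_{3}$ via Eqs.~\ref{eq:O2} and~\ref{eq:O3}. Composing the eight (weak) decreases along the pass gives the claimed monotonicity, and $\mathcal{L}\ge 0$ then implies convergence of the sequence of values as a corollary.
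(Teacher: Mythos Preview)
Your proposal is correct and follows essentially the same approach as the paper's proof: both argue that $\mathcal{L}$ is convex in each block ($G,H,U,V,O$) with the others fixed, and that the Procrustes step globally minimizes over $W\in\mathcal{O}_K$, so alternating minimization yields monotone decrease. Your version is simply a more careful expansion of the paper's two-sentence sketch, supplying the quadratic-leading-coefficient check, the Sch\"onemann appeal, and the KKT/feasibility verification for the outlier scores that the paper leaves implicit.
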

\begin{proof}
It is easy to check that the joint loss function $\mathcal{L}$ is convex in each of the variables $G,H,U,V$ and $O$, when all other variables are fixed. Also from the Procrustes solution, update of $W$ also minimizes the cost function. So alternating minimization guarantees decrease of cost function after every update till convergence.
\end{proof}

The computational complexity of each iteration (Steps 4 to 6 in Algo. \ref{alg:oane}) takes $O(N^2)$ time (assuming $K$ is a constant) without using any parallel computation, as updating each variable $G_{ik}, H_{kj}, V_{kd}, O_{1i}, O_{2i}, O_{3i}$ and $W$ takes $O(N)$ time. But we observe that ONE converges very fast on any of the datasets, as updating one variables amounts to reaching the global minima of the corresponding loss function when all other variables are fixed. The run time can be improved significantly by parallelizing the computation as done in \cite{huang2017accelerated}.

\begin{algorithm} 
  \caption{\textbf{ONE}}
  \label{alg:oane}
\begin{algorithmic}[1]
      
	\Statex \textbf{Input}: The network $\mathcal{G}=(V,E,C)$, $K$: Dimension of the embedding space where $K<min(n,d)$, ratio parameter $\theta$
    \Statex \textbf{Output}: The node embeddings of the network $G$, Outlier score of each node $v in V$
	\State Initialize $G$ and $H$ by standard matrix factorization on $A$, and $U$ and $V$ by that on $C$.
    \State Initialize the outlier scores $O_1$, $O_2$ and $O_3$ uniformly.
	\For{until stopping condition satisfied}
    	\State Update $W$ by Eq. \ref{eq:W}.
		\State Update $G$, $H$, $U$ and $V$ by Eq. from \ref{eq:G} to \ref{eq:V}.
        \State Update outlier scores by Eq. from \ref{eq:O1} to \ref{eq:O3}.
    \EndFor
    \State Embedding for the node $v_i$ is $\frac{G_{i\cdot} + U_{i\cdot} (W^T)}{2}$, $\forall v_i \in V$.
    \State Final outlier score for the node $v_i$ is a weighted average of $O_{1i}$, $O_{2i}$ and $O_{3i}$, $\forall v_i \in V$.
	\end{algorithmic}
  \end{algorithm} 


\section{Experimental Evaluation}
In this section, we evaluate the performance of the proposed algorithm on multiple attributed network datasets and compare the results with several state-of-the-art algorithms.

\subsection{Datasets Used and Seeding Outliers}
To the best of our knowledge, there is no publicly available attributed networks with ground truth outliers available. So we take four publicly available attributed networks with ground truth community membership information available for each node. The datasets are WebKB, Cora, Citeseer and Pubmed\footnote[2]{Datasets: \url{https://linqs.soe.ucsc.edu/data}}. 
To check the performance of the algorithms in the presence of outliers, we manually planted a total of 5\% outliers (with equal numbers for each type as shown in Figure \ref{fig:outliers}) in each dataset. The seeding process involves: (1) computing the probability distribution of number of nodes in each class, (2) selecting a class using these probabilities. For a structural outlier: (3) plant an outlier node in the selected class such that the node has ($m$ $\pm$ 10\%) of edges connecting nodes from the remaining (unselected) classes where $m$ is the average degree of a node in the selected class and (4) the content of the structural outlier node is made semantically consistent with the keywords sampled from the nodes of the selected class.
A similar approach is employed for seeding the other two types of outliers.
The statistics of these seeded datasets are given in Table \ref{tab:data}.
Outlier nodes apparently have similar characteristics in terms of degree, number of nonzero attributes, etc., and thus we ensured that they cannot be detected trivially.

\begin{table}
    \caption{Summary of the datasets (after planting outliers).} 
	\centering
    \resizebox{0.8\columnwidth}{!}{%
	\begin{tabular}{*5c}
	\toprule
	\sffamily{Dataset} & \#Nodes & \#Edges & \#Labels & \#Attributes \\
    \hline
	\midrule
    \sffamily{WebKB} & 919 & 1662  & 5 & 1703 \\
    \sffamily{Cora} & 2843 & 6269 & 7 & 1433 \\
    \sffamily{Citeseer} & 3477 & 5319 & 6 & 3703 \\
    \sffamily{Pubmed} & 20701 & 49523 & 3 & 500\\
\bottomrule
	\end{tabular}
    }
	\label{tab:data}
	\end{table} 

\subsection{Baseline Algorithms and Experimental Setup}
We use DeepWalk \cite{perozzi2014deepwalk}, node2vec \cite{grover2016node2vec}, Line \cite{tang2015line}, TADW \cite{yang2015network}, AANE \cite{huang2017accelerated}, GraphSage \cite{hamilton2017inductive} and SEANO \cite{liang2018semi} as the baseline algorithms to be compared with. The first three algorithms consider only structure of the network, the last four consider both structure and node attributes. We mostly use the default settings of the parameters values in the publicly available implementations of the respective baseline algorithms. As SEANO is semi-supervised, we include 20\% of the data with their true labels in the training set of SEANO to produce the embeddings.

For ONE, we set the values of $\alpha$ and $\beta$ in such a way that three components in the joint losss function in Eq. \ref{eq:L} contribute equally before the first iteration of the for loop in Algorithm \ref{alg:oane}. For all the experiments we keep embedding space dimension to be three times the number of ground truth communities. For each of the datasets, we run the for loop (Steps 4 to 6 in Alg. \ref{alg:oane}) of ONE only 5 times. We observe that ONE converges very fast on all the datasets. Convergence rate has been shown experimentally in Fig. \ref{fig:loss}.
\begin{figure}[h!]
  \centering
  \begin{subfigure}[b]{0.49\linewidth}
    \includegraphics[width=\linewidth]{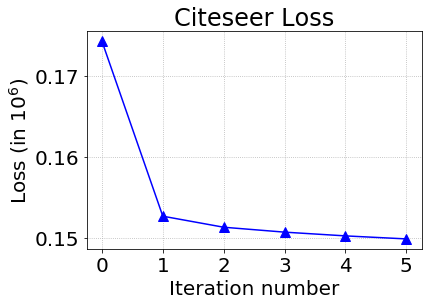}
  \end{subfigure}
  \begin{subfigure}[b]{0.49\linewidth}
    \includegraphics[width=\linewidth]{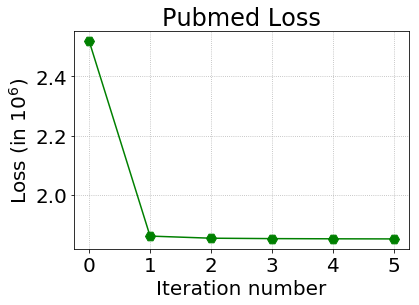}
  \end{subfigure}
   \caption{Values of Loss function over different iterations of ONE for Citeseer and Pubmed (seeded) datasets}
   \label{fig:loss}
\end{figure}

\subsection{Outlier Detection}
A very  important goal of our work is to detect outliers while generating the network embedding. In this section, we see the performance of all the algorithms in detecting outliers that we planted in each dataset.
SEANO and ONE give outlier scores directly as the output. For ONE, we rank the nodes in order of the higher values of a weighted average of three outlier scores (more the value of this average outlier score, more likely the vertex is an outlier). We have observed experimentally that $O_2$ is more important to determine outliers. For SEANO, we rank the nodes in order of the lower values of the weight parameter $\lambda$ (lower the value of $\lambda$ more likely the vertex is an outlier). For other embedding algorithms, as they do not explicitly output any outlier score, we use isolation forest to detect outliers from the node embeddings generated by them.

We use recall to check the performance of each embedding algorithm to find outliers. As the total number of outliers in each dataset is 5\%, we start with the top 5\% of the nodes in the ranked list (L) of outliers, and continue till 25\% of the nodes, and calculate the recall for each set with respect to the artificially planted outliers. Figure \ref{fig:outdetec} shows that ONE, though completely unsupervised in nature, is able to outperform SEANO mostly on all the datasets. 
SEANO considers the role of predicting the class label or context of a node based on only its attributes, or the set of attributes from its neighbors, and accordingly fix the outlierness of the node. Whereas, ONE explicitly compares structure, attribute and their combination to detect outliers and then reduces their effect iteratively in the optimization process. So discriminating outliers from the normal nodes becomes somewhat easier for ONE. As expected, all the other embedding algorithms (by running isolation forest on the embedding generated by them) perform poorly on all the datasets to detect outliers, except on Cora where AANE performs good. 


\begin{figure*}[h!]
  \centering
  \begin{subfigure}[b]{0.21\linewidth}
    \includegraphics[width=\linewidth]{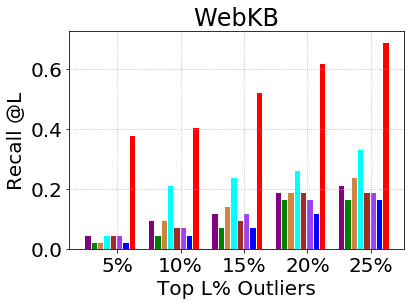}
  \end{subfigure}
  \begin{subfigure}[b]{0.21\linewidth}
    \includegraphics[width=\linewidth]{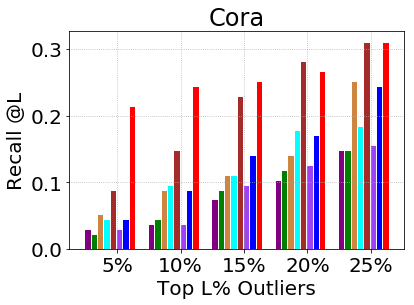}
  \end{subfigure}
  \begin{subfigure}[b]{0.21\linewidth}
    \includegraphics[width=\linewidth]{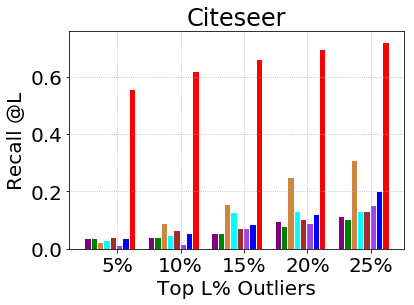}
  \end{subfigure}
  \begin{subfigure}[b]{0.21\linewidth}
    \includegraphics[width=\linewidth]{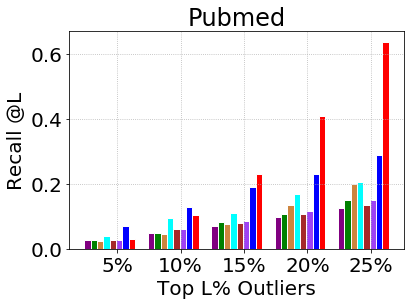}
  \end{subfigure}
  \begin{subfigure}[b]{0.11\linewidth}
    \includegraphics[width=\linewidth]{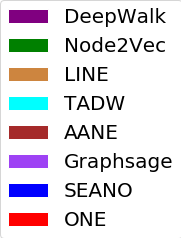}
  \end{subfigure}
  \caption{Outlier Recall at top L\% from the ranked list of outliers for all the datasets. ONE, though it is an unsupervised algorithm, outperforms all the  baseline algorithms in most of the cases. SEANO uses 20\% labeled data for training.}
  \label{fig:outdetec}
\end{figure*}


\subsection{Node Classification}
Node classification is an important application when labeling information is available only for a small subset of nodes in the network. This information can be used to enhance the accuracy of the label prediction task on the remaining/unlabeled nodes. 
For this task, firstly we get the embedding representations of the nodes and take them as the features to train a random forest classifier \cite{liaw2002classification}.
We split the set of nodes of the graph into training set and testing set. The training set size is varied from 10\% to 50\% of the entire data. The remaining (test) data is used to compare the performance of different algorithms.
We use two popular evaluation criteria based on F1-score, i.e., Macro-F1 and Micro-F1 to measure the performance of the multi-class classification algorithms. Micro-F1 is a weighted average of F1-score over all different class labels. Macro-F1 is an arithmetic average of F1-scores of all output class labels. Normally, the higher these values are, the better the classification performance is. We repeat each experiment 10 times and report the average results.

On all the datasets (Fig. \ref{fig:classi}), ONE consistently performs the best for classification, both in terms of macro and micro F1 scores. We see that conventional embedding algorithms like node2vec and TADW, which are generally good for consistent datasets, perform miserably in the presence of just 5\% outliers. AANE is the second best algorithm for classification in the presence of outliers, and it is very close to ONE in terms of F1 scores on the Citeseer dataset. {\it ONE is also able to outperform SEANO with a good margin, even though SEANO is a semi-supervised approach and uses labeled data for generating node embeddings}.

\begin{figure}[h!]
  \centering
  \begin{subfigure}[b]{0.49\linewidth}
    \includegraphics[width=\linewidth]{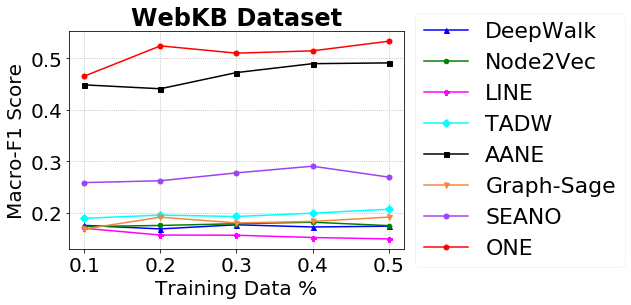}
  \end{subfigure}
  \begin{subfigure}[b]{0.49\linewidth}
    \includegraphics[width=\linewidth]{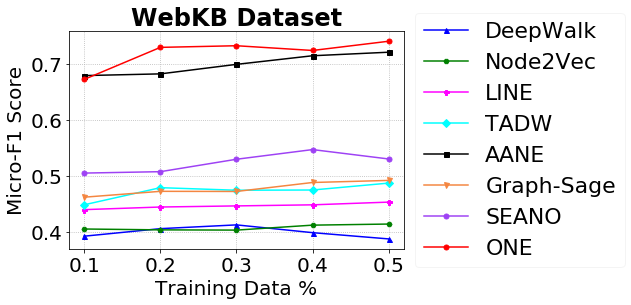}
  \end{subfigure}

  \centering
  \begin{subfigure}[b]{0.49\linewidth}
    \includegraphics[width=\linewidth]{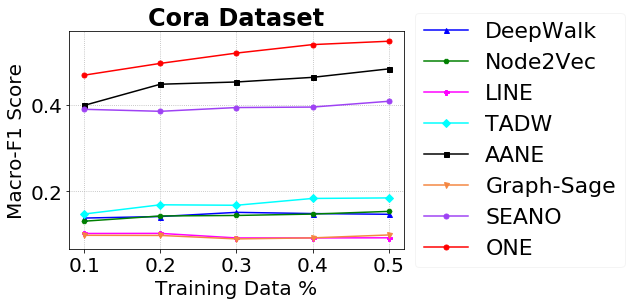}
  \end{subfigure}
  \begin{subfigure}[b]{0.49\linewidth}
    \includegraphics[width=\linewidth]{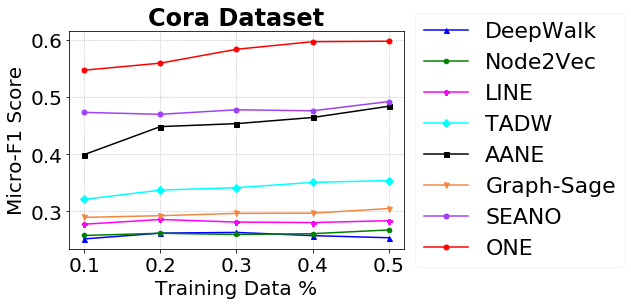}
  \end{subfigure}

  \centering
  \begin{subfigure}[b]{0.49\linewidth}
    \includegraphics[width=\linewidth]{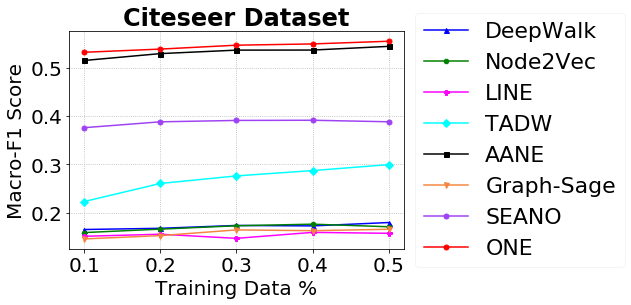}
  \end{subfigure}
  \begin{subfigure}[b]{0.49\linewidth}
    \includegraphics[width=\linewidth]{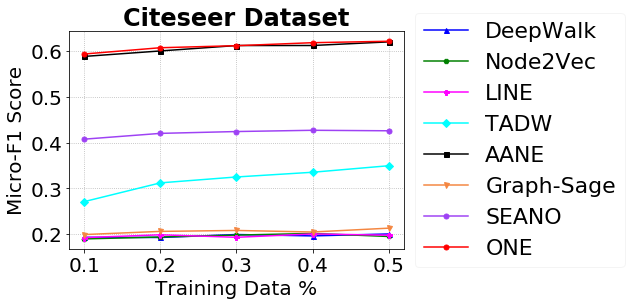}
  \end{subfigure}

  \centering
  \begin{subfigure}[b]{0.49\linewidth}
    \includegraphics[width=\linewidth]{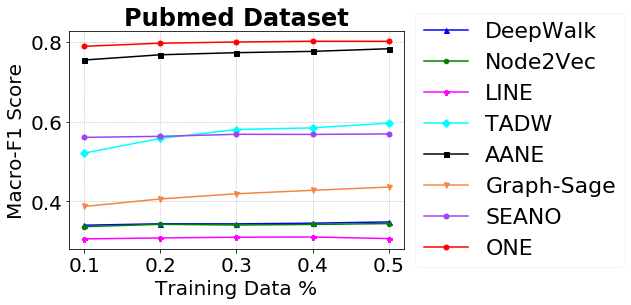}
  \end{subfigure}
  \begin{subfigure}[b]{0.49\linewidth}
    \includegraphics[width=\linewidth]{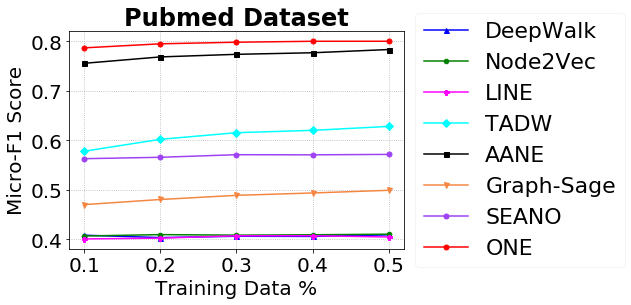}
  \end{subfigure}
   \caption{Performance of different embedding algorithms for Classification with Random Forest}
	\label{fig:classi}
\end{figure}

\subsection{Node Clustering}
Node Clustering is an unsupervised method of grouping the nodes into multiple communities or clusters.
First we run all the embedding algorithms to generate the embeddings of the nodes.
We use the node's embedding as the features for the node and then apply KMeans++ \cite{kmeans++}.
KMeans++ just divides the data into different groups. To find the test accuracy we need to assign the clusters with an appropriate label and compare with the ground truth community labels. For finding the test accuracy we use unsupervised clustering accuracy \cite{xie2016unsupervised} which uses different permutations of the labels and chooses the label ordering which gives best possible accuracy $Acc(\mathcal{\hat{C}},\mathcal{C}) = \max_{\mathcal{P}} \frac{ \sum\limits_{i=1}^N \mathbf{1}(\mathcal{P}(\mathcal{\hat{C}}_i)  = \mathcal{C}_i) }{N}$.
Here $\mathcal{C}$ is the ground truth labeling of the dataset such that $\mathcal{C}_i$ gives the ground truth label of $i$th data point. Similarly $\mathcal{\hat{C}}$ is the clustering assignments discovered by some algorithm, and $\mathcal{P}$ is a permutation of the set of labels.
We assume $\mathbf{1}$ to be a logical operator which returns 1 when the argument is true, otherwise returns 0.
Clustering performance is shown and explained in Fig. \ref{fig:clus}. It can be observed that except for ONE, all the conventional embedding algorithms fail in the presence of outliers. Our proposed unsupervised algorithm is able to outperform or remain close to SEANO, though SEANO is semi supervised in nature, on all the datasets.

\begin{figure}[h!]
  \centering
    \includegraphics[width=0.7\columnwidth]{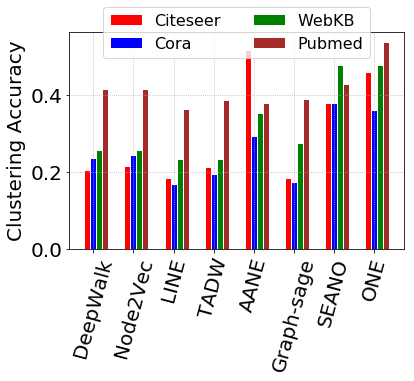}
  \caption{Clustering Accuracy of KMeans++ on the embeddings generated by different algorithms. ONE is always close to the best of the baseline algorithms. AANE works best for Citeseer. Though SEANO uses 20\% labeled data as the extra supervision to generate the embeddings, its accuracy is always close (or less) to ONE which is completely unsupervised in nature.}
  \label{fig:clus}
\end{figure}


\section{Discussion and Future Work}
We propose ONE, an unsupervised attributed network embedding approach that jointly learns and minimizes the effect of outliers in the network. We derive the details of the algorithm to optimize the associated cost function of ONE. Through experiments on seeded real world datasets, we show the superiority of ONE for outlier detection and other downstream network mining applications.

There are different ways to extend the proposed approach in the future. One interesting direction is to parallelize the algorithm and check its performance on real world large attributed networks. Most of the networks are very dynamic now-a-days. Even outliers also evolve over time. So bringing additional constraints in our framework to capture the dynamic temporal behavior of the outliers and other nodes of the network would also be interesting. As mentioned in Section \ref{sec:algOne}, ONE converges very fast on real datasets. But updating most of the variables in this framework takes $O(N)$ time, which leads to $O(N^2)$ runtime for ONE without any parallel processing. So, one can come up with some intelligent sampling or greedy method, perhaps based on random walks, to replace the expensive sums in various update rules.

{\footnotesize
\bibliography{ONE}

\begin{thebibliography}{}

\bibitem[\protect\citeauthoryear{Arthur and Vassilvitskii}{2007}]{kmeans++}
Arthur, D., and Vassilvitskii, S.
\newblock 2007.
\newblock k-means++: The advantages of careful seeding.
\newblock In {\em Proceedings of the eighteenth annual ACM-SIAM symposium on
  Discrete algorithms},  1027--1035.
\newblock Society for Industrial and Applied Mathematics.

\bibitem[\protect\citeauthoryear{Cunningham and
  Ghahramani}{2015}]{cunningham2015}
Cunningham, J.~P., and Ghahramani, Z.
\newblock 2015.
\newblock Linear dimensionality reduction: Survey, insights, and
  generalizations.
\newblock {\em Journal of Machine Learning Research} 16:2859--2900.

\bibitem[\protect\citeauthoryear{Gao and Huang}{2018}]{gao2018deep}
Gao, H., and Huang, H.
\newblock 2018.
\newblock Deep attributed network embedding.
\newblock In {\em IJCAI},  3364--3370.

\bibitem[\protect\citeauthoryear{Grover and
  Leskovec}{2016}]{grover2016node2vec}
Grover, A., and Leskovec, J.
\newblock 2016.
\newblock node2vec: Scalable feature learning for networks.
\newblock In {\em Proceedings of the 22nd ACM SIGKDD international conference
  on Knowledge discovery and data mining},  855--864.
\newblock ACM.

\bibitem[\protect\citeauthoryear{Gupta \bgroup et al\mbox.\egroup
  }{2012}]{gupta2012integrating}
Gupta, M.; Gao, J.; Sun, Y.; and Han, J.
\newblock 2012.
\newblock Integrating community matching and outlier detection for mining
  evolutionary community outliers.
\newblock In {\em Proceedings of the 18th ACM SIGKDD international conference
  on Knowledge discovery and data mining},  859--867.
\newblock ACM.

\bibitem[\protect\citeauthoryear{Hamilton, Ying, and
  Leskovec}{2017a}]{hamilton2017inductive}
Hamilton, W.; Ying, Z.; and Leskovec, J.
\newblock 2017a.
\newblock Inductive representation learning on large graphs.
\newblock In {\em Advances in Neural Information Processing Systems},
  1025--1035.

\bibitem[\protect\citeauthoryear{Hamilton, Ying, and
  Leskovec}{2017b}]{hamilton2017representation}
Hamilton, W.~L.; Ying, R.; and Leskovec, J.
\newblock 2017b.
\newblock Representation learning on graphs: Methods and applications.
\newblock {\em arXiv preprint arXiv:1709.05584}.

\bibitem[\protect\citeauthoryear{Huang, Li, and
  Hu}{2017a}]{huang2017accelerated}
Huang, X.; Li, J.; and Hu, X.
\newblock 2017a.
\newblock Accelerated attributed network embedding.
\newblock In {\em Proceedings of the 2017 SIAM International Conference on Data
  Mining},  633--641.
\newblock SIAM.

\bibitem[\protect\citeauthoryear{Huang, Li, and Hu}{2017b}]{huang2017label}
Huang, X.; Li, J.; and Hu, X.
\newblock 2017b.
\newblock Label informed attributed network embedding.
\newblock In {\em Proceedings of the Tenth ACM International Conference on Web
  Search and Data Mining},  731--739.
\newblock ACM.

\bibitem[\protect\citeauthoryear{Kipf and Welling}{2016}]{kipf2016semi}
Kipf, T.~N., and Welling, M.
\newblock 2016.
\newblock Semi-supervised classification with graph convolutional networks.
\newblock {\em arXiv preprint arXiv:1609.02907}.

\bibitem[\protect\citeauthoryear{Lample \bgroup et al\mbox.\egroup
  }{2018}]{lample2018word}
Lample, G.; Conneau, A.; Ranzato, M.; Denoyer, L.; and Jégou, H.
\newblock 2018.
\newblock Word translation without parallel data.
\newblock In {\em International Conference on Learning Representations}.

\bibitem[\protect\citeauthoryear{Lee and Seung}{2001}]{lee2001algorithms}
Lee, D.~D., and Seung, H.~S.
\newblock 2001.
\newblock Algorithms for non-negative matrix factorization.
\newblock In {\em Advances in neural information processing systems},
  556--562.

\bibitem[\protect\citeauthoryear{Liang \bgroup et al\mbox.\egroup
  }{2018}]{liang2018semi}
Liang, J.; Jacobs, P.; Sun, J.; and Parthasarathy, S.
\newblock 2018.
\newblock Semi-supervised embedding in attributed networks with outliers.
\newblock In {\em Proceedings of the 2018 SIAM International Conference on Data
  Mining},  153--161.
\newblock SIAM.

\bibitem[\protect\citeauthoryear{Liaw, Wiener, and
  others}{2002}]{liaw2002classification}
Liaw, A.; Wiener, M.; et~al.
\newblock 2002.
\newblock Classification and regression by randomforest.
\newblock {\em R news} 2(3):18--22.

\bibitem[\protect\citeauthoryear{Liu, Huang, and Hu}{2017}]{liu2017accelerated}
Liu, N.; Huang, X.; and Hu, X.
\newblock 2017.
\newblock Accelerated local anomaly detection via resolving attributed
  networks.
\newblock In {\em Proceedings of the 26th International Joint Conference on
  Artificial Intelligence},  2337--2343.
\newblock AAAI Press.

\bibitem[\protect\citeauthoryear{Liu, Ting, and Zhou}{2008}]{liu2008isolation}
Liu, F.~T.; Ting, K.~M.; and Zhou, Z.-H.
\newblock 2008.
\newblock Isolation forest.
\newblock In {\em 2008 Eighth IEEE International Conference on Data Mining},
  413--422.
\newblock IEEE.

\bibitem[\protect\citeauthoryear{McPherson, Smith-Lovin, and
  Cook}{2001}]{mcpherson2001birds}
McPherson, M.; Smith-Lovin, L.; and Cook, J.~M.
\newblock 2001.
\newblock Birds of a feather: Homophily in social networks.
\newblock {\em Annual review of sociology} 27(1):415--444.

\bibitem[\protect\citeauthoryear{Mikolov \bgroup et al\mbox.\egroup
  }{2013}]{mikolov2013efficient}
Mikolov, T.; Chen, K.; Corrado, G.; and Dean, J.
\newblock 2013.
\newblock Efficient estimation of word representations in vector space.
\newblock {\em arXiv preprint arXiv:1301.3781}.

\bibitem[\protect\citeauthoryear{Niepert, Ahmed, and
  Kutzkov}{2016}]{niepert2016learning}
Niepert, M.; Ahmed, M.; and Kutzkov, K.
\newblock 2016.
\newblock Learning convolutional neural networks for graphs.
\newblock In {\em International conference on machine learning},  2014--2023.

\bibitem[\protect\citeauthoryear{Perozzi, Al-Rfou, and
  Skiena}{2014}]{perozzi2014deepwalk}
Perozzi, B.; Al-Rfou, R.; and Skiena, S.
\newblock 2014.
\newblock Deepwalk: Online learning of social representations.
\newblock In {\em Proceedings of the 20th ACM SIGKDD international conference
  on Knowledge discovery and data mining},  701--710.
\newblock ACM.

\bibitem[\protect\citeauthoryear{Ribeiro, Saverese, and
  Figueiredo}{2017}]{ribeiro2017struc2vec}
Ribeiro, L.~F.; Saverese, P.~H.; and Figueiredo, D.~R.
\newblock 2017.
\newblock struc2vec: Learning node representations from structural identity.
\newblock In {\em Proceedings of the 23rd ACM SIGKDD International Conference
  on Knowledge Discovery and Data Mining},  385--394.
\newblock ACM.

\bibitem[\protect\citeauthoryear{Sch{\"o}nemann}{1966}]{schonemann1966generalized}
Sch{\"o}nemann, P.~H.
\newblock 1966.
\newblock A generalized solution of the orthogonal procrustes problem.
\newblock {\em Psychometrika} 31(1):1--10.

\bibitem[\protect\citeauthoryear{Tang \bgroup et al\mbox.\egroup
  }{2015}]{tang2015line}
Tang, J.; Qu, M.; Wang, M.; Zhang, M.; Yan, J.; and Mei, Q.
\newblock 2015.
\newblock Line: Large-scale information network embedding.
\newblock In {\em Proceedings of the 24th International Conference on World
  Wide Web},  1067--1077.
\newblock International World Wide Web Conferences Steering Committee.

\bibitem[\protect\citeauthoryear{Xie, Girshick, and
  Farhadi}{2016}]{xie2016unsupervised}
Xie, J.; Girshick, R.; and Farhadi, A.
\newblock 2016.
\newblock Unsupervised deep embedding for clustering analysis.
\newblock In {\em International conference on machine learning},  478--487.

\bibitem[\protect\citeauthoryear{Xing \bgroup et al\mbox.\egroup
  }{2015}]{xing2015normalized}
Xing, C.; Wang, D.; Liu, C.; and Lin, Y.
\newblock 2015.
\newblock Normalized word embedding and orthogonal transform for bilingual word
  translation.
\newblock In {\em Proceedings of the 2015 Conference of the North American
  Chapter of the Association for Computational Linguistics: Human Language
  Technologies},  1006--1011.

\bibitem[\protect\citeauthoryear{Yang \bgroup et al\mbox.\egroup
  }{2015}]{yang2015network}
Yang, C.; Liu, Z.; Zhao, D.; Sun, M.; and Chang, E.~Y.
\newblock 2015.
\newblock Network representation learning with rich text information.
\newblock In {\em IJCAI},  2111--2117.

\end{thebibliography}
\bibliographystyle{aaai}}

\end{document}